\newtheorem{theorem}{Theorem}
\title{Dual network structure of the AV node}
\date{\today}
\author{Anna V. Maltsev${}^{1, *}$}
\author{Yasir Z. Barlas${}^1$}
\author{Adina Hazan${}^2$}
\author{Rui Zhang${}^2$}
\author{Michela Ottolia${}^3$}
\author{Joshua I. Goldhaber${}^{2,3}$}
\thanks{${}^1$Queen Mary University of London}
\thanks{${}^2$Cedars Sinai}
\thanks{${}^3$University of California Los Angeles}
\thanks{${}^*$ a.maltsev@qmul.ac.uk}
\begin{document}

\maketitle

\begin{abstract}
Biological systems, particularly the brain, are frequently analyzed as networks, conveying mechanistic insights into their function and pathophysiology. This is the first study of a functional network of cardiac tissue. We use calcium imaging to obtain two functional networks in a subsidiary but essential pacemaker of the heart, the atrioventricular node (AVN).  The AVN is a small cellular structure with dual functions: a) to delay the pacemaker signal passing from the sinoatrial node (SAN) to the ventricles, and b) to serve as a back-up pacemaker should the primary SAN pacemaker fail. Failure of the AVN can lead to syncope and death. We found that the shortest path lengths and clustering coefficients of the AVN are remarkably similar to those of the brain. The network is ``small-world," thus optimized for energy use vs transmission efficiency. We further study the network properties of AVN tissue with knock-out of the sodium-calcium exchange transporter. In this case, the average shortest path-lengths remained nearly unchanged showing network resilience, while the clustering coefficient was somewhat reduced, similar to schizophrenia in brain networks. When we removed the global action potential using principal component analysis (PCA) in wild-type model, the network lost its ``small-world" characteristics with less information-passing efficiency due to longer shortest path lengths but more robust signal propagation resulting from higher clustering. These two wild-type networks (with and without global action potential) may correspond to fast and slow conduction pathways. Lastly, a one-parameter non-linear preferential attachment model is a good fit to all three AVN networks.
\end{abstract}

\section{Introduction} The atrioventricular node (AVN)  is a small structure in the heart that lies in the Koch triangle. Its function is to delay the sinoatrial node (SAN) pacemaker signal passing from the atrium to the ventricles, ensuring that the atria have enough time to eject blood into the ventricles before they contract. An additional function of the AVN is to serve as an emergency ``back-up" pacemaker in the event the SAN fails. Failure of the AVN to conduct or pace can lead to profound bradyarrhythmia or cardiac arrest.

It is growing increasingly apparent from studies in various types of tissue, including neuronal, pancreatic, and cardiac SAN \cite{stovzer2013functional, bychkov2020synchronized, liao2017small}, that healthy function emerges from  interactions of constituent parts. One important aspect of interactions is their network structure, i.e. who interacts with whom. In the brain, such a network is known as the connectome and connectomics has become a pivotal research domain. Here we present, to our knowledge, the first high-speed 2D high-resolution calcium imaging study of the AVN, adapting experimental methodology used to study the SAN by Torrente et al (2015) \cite{torrente2015burst}. Using these images, we constructed a functional connectome of the AVN in order to elucidate its network structure.

Network theory allows us to compare diverse real-world systems as networks. Many networks such as neuronal and social networks have common features: they have tight-knit communities and long-range connections. Networks with these two characteristics are called \textbf{small-world.} To be precise, the \textbf{clustering coefficient} $C$ is large (two neighbours of a node are likely to be connected) and the \textbf{average least path lengths} $L$ (the smallest possible number of nodes one has to visit on a journey from node $i$ to node $j$ averaged over all pairs $i$ and $j$) are generally short. Because $C$ and $L$ depend strongly on network size and degree distribution, one studies a normalized ratio $\gamma = C_{net}/C_{rand}$ and $\lambda = L_{net}/L_{rand}$, where $C_{net}$ and $L_{net}$ are $C$ and $L$ from the network of interest and $C_{rand}$ and $L_{rand}$ are same quantities from a random graph with the same number of vertices and degree distribution (similar to \cite{achard2006resilient, stovzer2013functional}).

Studies of network structure have been performed on several types of tissue in healthy function and disease. The literature on network properties in the brain is vast \cite{lynn2019physics, liao2017small}. This direction of study dates back to Watts-Strogatz \cite{watts1998collective}, where the concept of small-worldness was first formalized in relation to networks and universality was conjectured. The \textit{C-elegans} connectome, alongside a social network of actors and the power grid, were an inspiration for the concept \cite{watts1998collective}. In Achard et al (2006)  \cite{achard2006resilient}, the authors obtain a functional human brain network by studying correlations of frequencies using wavelet analysis, and they compute $\lambda$ and $\gamma$.
Similar studies of Islets of Langerhans in the pancreas showed that the functional connectivity network of $\beta$-cells derived from correlations of local calcium exhibits small-world properties \cite{stovzer2013functional}. They show that $\gamma$ and network efficiency (related to $\lambda$) vary depending on experimental conditions, demonstrating that the network structure adapts to function.

The AVN is known to exhibit morphological and electrophysiological heterogeneity that is conjectured to account for its two functions: slowing of conduction to the ventricles from the atria, and backup pacemaker activity \cite{aanhaanen2010developmental, Marger2011AVNEP}. Here we demonstrate heterogeneity of calcium signals, and we find that the network structure of the AVN is significantly different depending upon whether we analyze it in the presence or absence of global action potentials (AP), which we can remove using PCA (similar to PCA use in SAN in \cite{norris2023meaningful}). In the absence of global APs, the network has longer shortest path lengths (the signal propagates slower) and higher clustering (indicating local signals and modular structure). The opposite characterizes the network properties when the global AP is included in the analysis. To assess the physiological significance of this, we performed network analysis on the AVN of an atrial-specific sodium-calcium exchanger (NCX) knock-out (KO) mouse, which is known to have high degree conduction block and few spontaneous pacemaker APs compared to wild-type (WT) mice \cite{HAZAN2020329a}. We find that while its $\lambda$ is similar to the AP-present network demonstrating in practice the resilience of the small-world networks, its $\gamma$ is significantly reduced making its less small-world (see \Cref{fig:network-deg-histogram} for examples of the three types of networks extracted from the AVN).

Lastly we find that the one-parameter non-linear preferential attachment (NLPA) model  is a good fit for the AVN networks, implying that the energetic cost of longer connections in the AVN is negligible and that hubs are potentially important.
We find by brute-force parameter search that it is nearly as good as the two-parameter geometric preferential attachment (GPA) model, commonly used for the brain \cite{vertes2012simple}. We also offer a mathematical proof that the degree distribution mean and degree distribution variance are equal for a given node in any geometric graph (whose connection function has at last one moment), offering a way of quickly rejecting an infinite number of possible models based on the discrepancy between the mean and the variance which is common in real-world networks.

The importance of these findings is two-fold.
This is a first network analysis of experimentally observed cardiac tissue, and demonstration of its small-world scale-free characteristics.
With respect to neural tissue, network structure has been a hot topic of research now for 25 years, with the first paper by Watts-Strogatz carrying over 50,000 citations \cite{watts1998collective}. A vast array of methodology has been developed to further our understanding of how such network structures may arise, what role they play, and how to extract meaningful quantitative information from huge quantities of experimental data. The present paper opens the possibility of borrowing this methodology in the study of cardiac tissue. It is widely believed that a small-world architecture is so ubiquitous because it allows for an optimisation of energetic vs transmission efficiency as well as resilience with respect to vertex deletion. All these considerations will be applicable and important also for the AVN.

Secondly, the universal small-world network structures can break down in pathology and aging (\cite{liao2017small} Sections 5 and 3.3.) In particular, in schizophrenia, $C$ for the functional brain network decreases (Fig 5 of \cite{liao2017small}), which is similar to the effect of NCX KO that we find here.
The function of the AVN is to control the speed of a signal. Changes to the network structure strongly impact the speed and robustness of signal propagation through the network, thus network disruptions may have a strong impact on AVN function.

\section{Results}
Our first result is a demonstration of heterogeneity in amplitude and frequency of calcium signaling in the AVN (\Cref{fig:heterogeneity}), similar to SAN \cite{bychkov2020synchronized}. Furthermore, several of the top principal components (PCs) are delocalized (i.e. the inverse participation ratio (IPR, see SI) is low), indicating that in addition to the global action potential, there are action potentials which are semi-global (\Cref{fig:heterogeneity}) happening out of synch. This indicates unstable pacemaker activity or conduction which could result in  arrhythmia. A rough estimate of the spatial extent of the AP's is 1/IPR nodes, in this case 49.4\% of the area for the global AP and 39.4\% for the next largest AP.

Having extracted two networks from our WT data by either removing or retaining the first PC, and one from NCX KO data where in most cases the global AP is lacking, we study their $\lambda$'s and $\gamma$'s. We note that both $\lambda$'s and $\gamma$'s are significantly (one-sided $p =  0.004$ and $ 0.037$ for $\lambda$'s, $p = 0.001$ and $0.0076$ for $\gamma$'s) lower for NCX KO and WT AP-present than for WT AP-removed (see \Cref{fig:box_whiskers}). Both their $\lambda$ values are slightly larger than those of the human functional brain network ($\lambda = 1.07, \gamma = 1.23$, values read off from Fig 2 of Achard et al (2006) \cite{achard2006resilient} for mean degree 15 scale 4), the WT AP-present network has the same $\gamma$. On the other hand, $\gamma$ is significantly smaller in NCX KO (one-sided $p=0.017$) indicating a reduction of small-world attributes. The three AVN networks exhibited no significant excess kurtosis and thus are either not scale-free or too small to exhibit heavy-tailed behavior. This finding is also similar to the brain \cite{achard2006resilient}.

When controlled for mean degree (as suggested in \cite{stovzer2013functional}), the network structure of the NCX KO network is close in its properties to the WT AP-present network. The distinction of NCX KO data from WT AP-present data is in absolute magnitude of the correlation coefficients. The average correlation coefficient for two connected points is 0.94 (sd 0.02) for WT AP-present and 0.37 (sd
0.24) for NCX KO. These stronger correlations in the WT tissues are due to the presence of a global AP. The strength of the AP can be seen from the variability explained by the first PC of the data (see \Cref{fig:PCA}).

The NLPA model with just one non-linearity parameter $\alpha$ is a good fit for all the AVN networks (see \Cref{fig:model_table} for results and SI for details). We had conjectured that a more complex 2-parameter GPA model used for the brain \cite{vertes2012simple} would be appropriate. This model has two parameters $\alpha$ (how much vertices get connected to popular vertices) and $\sigma$ (how much long links are discouraged), with 8 links added at each step of growth, to match an approximate degree of 15. However, after a parametric search in $(\alpha, \sigma)$ optimizing a $p$-value for a paired t-test for standard deviation of degree distribution, $C$, and $L$, the value of the maximization parameter for the 2-parameter model was close to that from simpler one-parameter NLPA model (same as GPA with $\sigma = 0$). For WT AVN networks, the discrepancies of the optimization parameter were only 0.6\% for AP-present, 4.2\% for AP-removed, and 12.8\% for NCX KO. A pure geometric model ($\alpha = 0$) is a poor fit because the variance of degree distribution is much larger than the mean, and we prove mathematically (see SI) that the degree distribution of a node in a geometric graph with any connection function has equal mean and variance. We offer a direct comparison of key graph metrics of average from the model with the average from data.

\section{Discussion}

The two networks we identify in the AVN may correspond to the slow and fast pathways  (SP and FP) of conduction. As the anatomic structure of mouse AV node is not as well-understood as human, we can speculate that the WT AP-present network corresponds to a network of larger cells with a larger presence of high-conductance connexins CX40 as a physiological basis for the more efficient long-range network structure. This interpretation is supported by studies of CX40 KO. The KO of CX40 connexin increases the PR interval by about 20\% \cite{TEMPLE2013297}. The median increase in $\lambda$ from WT AP-present to WT AP-removed network is 20.6\%, which implies the same increase of signal propagation time (Fig 3b of \cite{watts1998collective}).
The WT AP-present network is supported on a subset of the nodes of the larger WT AP-removed network (\Cref{fig:network-deg-histogram} bottom) and may correspond to the SP. This may be mediated by medium and low conductance connexins which are known to be present in the AV node. In particular CX40 and CX43 are known to have roughly opposite expression, while CX45 is present in all the tissues of the AV node \cite{aanhaanen2010developmental,TEMPLE2013297}.

An important function of the SP is to be a fail-safe should the FP be compromised. A network with higher $\gamma$ such as the WT AP-removed network would serve this purpose, since for a network with higher clustering a lower signal transmission probability from one node to the next is needed for the signal to get transmitted through the network (compare Fig 2 with Fig 3a of \cite{watts1998collective}). This means that less powerful intracellular APs and weaker connexins are sufficient to transmit the excitation signal.

The calcium signal is well-known to be a good proxy for excitation on an intercellular level, and it also serves important intracellular functions. In NCX KO, the tissue operates in a regime of subsarcolemmal calcium loading, which likely decouples gap junctions, increases intracellular calcium, suppresses L-type Ca current and activates small K channels \cite{torrente2015burst}. The stability of $\lambda$ under such a strong intervention as an NCX KO is most likely due to the strong non-linearity in $\lambda$ as functions of the number of long-range connections (Fig 2 of \cite{watts1998collective}). The high calcium concentration decreases the intercellular gap-junction mediated connections but even a smaller number of remaining links is sufficient to have reasonably short path lengths, attesting to the resilience of small-world architecture. Indeed, despite a KO of such a crucial protein, NCX1 KO mice survive into adulthood. On the other hand, $\gamma$ is close to 1, indicating that the network is close to a random ER graph. We can speculate that the Ca overload is interfering with normal local calcium signalling, and this is reducing local links in the network.

Back-up pacemaking is an important feature of the AVN, and our data help to demonstrate a similar pacemaking mechanism as has been discovered recently for the SAN based on emergence of rhythm from interactions of cells  \cite{bychkov2020synchronized}. Additionally, a small-world modular model for the SAN has been suggested \cite{maltsev2023novel}. As we demonstrate the small-world structure on experimental data, a similar numerical model may be appropriate for the AVN.

The NSPA model approximates all three AVN networks surprisingly well with only one parameter $\alpha$. In models with $\alpha > 2$, large hub formation is expected with a few nodes connecting to all the others  \cite[Chap. 5.8]{barabasi2013network}. Thus despite finding no excess kurtosis in the degree distribution and thus no evidence of heavy tails, we expect hubs to be important in the AVN. The non-linearity present in the AVN NSPA is very strong, compared with internet and other scale-free networks (approximately $\alpha = 1$). The AVN network structure may have some differences from brain and many other networks, as the geometric component present in brain does not appear to be as important. This may be due to the smaller physical size of our data (912 $\mu$m square) and of the AVN. At such distances even long-range connections may still cheap enough  and are thus not energetically suppressed.

\section{Methods}{The AVN preparation and calcium imaging are similar to \cite{torrente2015burst} for the SAN. We select points in the image uniformly at random and threshold according to the correlation coefficient of the corresponding time series to obtain the network (see SI for details).

}

\section{Conclusions}
This is the first dynamic high-resolution  calcium imaging study of the mouse AV node and a first application of network theory to cardiac tissue. We demonstrate that the AV node functional network structure has small-world characteristics similar to a functional network of the human brain. While the average shortest path lengths are nearly unaffected by NCX KO, clustering coefficient ratio is significantly reduced, mimicking the changes in a schizophrenic brain. By removing the first principal component using PCA, we obtain a second network with larger clustering and average path lengths. We speculate that the two networks correspond to the fast and slow pathways of conduction, mediated by morphological and electrophysiological heterogeneity of the AVN. The non-linear preferential attachment model is a good fit for all AVN networks.

\section*{Acknowledgements}{A.M. thanks Victor Maltsev for useful discussions, Alexander Maltsev for suggesting CaImAn for video stabilization, and Donald Bers for a helpful discussion at the Biophysical Meeting 2024.  A.M. and Y.B. were funded by the Royal Society grant number URF$\backslash$R$\backslash$221017. J.I.G. and M.O. were funded by the National Institutes of Health grant number R01HL147569. A.H. was funded by National Institutes of Health grant number T32HL116273. This research utilised Queen Mary's Apocrita HPC facility, supported by QMUL Research-IT (http://doi.org/10.5281/zenodo.438045). The Python code used for data analytics can be found at  https://github.com/yasirbarlas/av-node-analysis}	

\section{Supplementary Information}

\subsection{Experimental methods}

\subsubsection{Intact AV node Preparation} To isolate the AV node for calcium (Ca) imaging, hearts were removed from heparinized (300 U i.p.) and anesthetized (isoflurane) atrial-specific sodium-calcium exchange (NCX) knockout (KO) mice \cite{groenke2013complete} and their NCX floxed littermates. The right and left atria, sinoatrial (SA) node and atrioventricular (AV) node were separated from the ventricles in one block, and pinned to the bottom of an optical chamber (Fluorodish, FD35PDL-100; WPI) coated with $\approx 2$ mm of clear Sylgard (Sylgard 184 Silicone elastomer kit; Dow Corning), filled with heparinized (10 U/mL) Tyrode's heated to 36 \textdegree C. The Tyrode's solution contained (in mM): 140 NaCl, 5.4 KCl, 5 Hepes, 5.5 glucose, 1 MgCl$_2$, and 1.8 CaCl$_2$ (pH adjusted to 7.4 with NaOH). Using a dissection microscope (SZX16; Olympus), the atrial tissue was transilluminated so we could identify and isolate the AV node by microdissection using the borders of the superior and inferior vena cava, the interatrial septum and the coronary sinus as landmarks \cite{torrente2015burst}.

\subsubsection{Ca Imaging of the AV node} To record cellular Ca, we immersed the AV node preparation in Tyrode's containing the high sensitivity Ca indicator Cal-520/AM (10 $\mu$M; AAT Bioquest)\cite{tada2014highly} and Pluronic F-127 (0.13\%; Invitrogen) at 20-22 \textdegree C. Imaging was  performed in dye-free Tyrodes at 34-36 \textdegree C using the xyt mode (2D) of a Leica TCS-SP5-II (Leica Microsystems Inc.), with 488-nm
excitation, $>505$-nm emission and a $10\times$ objective (N PLAN $10\times/0.25$; Leica). Scan speeds ranged from 36 to 5 ms per frame depending on the field size. Files were stored as .lif before export for analysis.

\subsection{Principal Component Analysis and Delocalization}
Here we recall the basics of PCA. PCA is a set of statistical tools rooted in random matrix theory (RMT). By studying the eigenvalues and eigenvectors of a correlation matrix of time series, one is able to separate the eigenvalues arising from noise (those that follow RMT predictions), those that reflect long and medium range spatial correlations in the data. While most eigenvalues follow RMT predictions, it is those that deviate that carry meaningful information the signal.
The eigenvectors corresponding to a few largest eigenvalues form the basis of dimensionality reduction in PCA, where they are known as principal components (PCs). The null model for PCA is the celebrated finite-rank Spiked Covariance Ensemble \cite{baik2005phase}. This consists of i.i.d. random noise variables whose eigenvalues conform to RMT predictions, plus several deterministic ``spikes", which are vectors that are designed to model a signal. There is a phase transition in the largest eigenvalues of the correlation matrix in size of the ``spike." If the spike is large enough, the largest eigenvalue separates from all the other eigenvalues, which still conform to RMT predictions. The eigenvector corresponding to this largest eigenvalue in a real data set is an estimator of the spike (known as the ``principal component estimator" \cite{stock2002forecasting}), and thus looking at the top PCs yields orthogonal components of the signal.
We perform PCA using `PCA()' from scikit-learn \cite{scikit-learn}.

One may ask about the spatial extent of these signals. To study whetheter the signals are local or somewhat global (localization and delocalization) quantitatively, one introduces the inverse participation ratio (IPR) of a vector $\textbf{v}$ of length $K$ and norm 1, i.e. $\sum_{i=1}^K \textbf{v}(i)^2 = 1$, as
\begin{equation}\label{e:IPR}
\text{IPR}(\textbf{v}) = \sum_{i=1}^K \textbf{v}(i)^4.
\end{equation}
We notice here that the IPR can range between $1/K$ for vectors which are completely delocalized with each component equalling $\frac{1}{\sqrt K}$ and 1 for vectors which are completely localized with one component equal to 1 and the rest equal to 0. Since correlations are mostly positive, the eigenvectors are also strongly skewed toward the positive, which means that for the eigenvectors to be orthogonal they can only be large on non-intersecting sets and near zero everywhere else. Therefore a rough estimate of the number of vector components that are large is $1/\text{IPR}$, which gives $K =$ the number of non-zero elements in the vector.

\subsection{Network Construction and Threshold Analysis}
To extract a network from calcium imaging data, we first use CaImAn \cite{giovannucci2019caiman} software for motion stabilization. We select 150 small regions in the image uniformly at random, conditional that they do not intersect, and we average the signal over the region for each video frame. This yields 150 time series. We standardise this matrix by removing the mean and scaling to unit variance. This is done using the `StandardScaler()' function in the scikit-learn library \cite{scikit-learn}. Then for WT AVN we can construct two different networks: one where the 1st PC is present and 2nd one where the 1st PC is removed. In each of these two settings we construct the correlation matrix
and threshold according to the correlation coefficient. We choose the threshold so as to obtain an approximate mean degree of 15 in the largest connected component. This procedure gives us an effective connectivity network of the AV node.
For all our degree 15 networks, the correlations marked with edges are statistically significant. We assess significance similar to equation (2) in \cite{stovzer2013functional} for pancreas. For all network analysis, the Python package NetworkX \cite{hagberg2008exploring} was used.

The mean degree 15 thresholding was chosen to balance between network sizes and significance of correlations in the network (see \Cref{fig:mean-deg-dependence}). When mean degree is 20, one network in one of the datasets has a $p$-value of 0.1 which is too high. When mean degree is 10, the sizes of the largest connected component, which is what we study, get small.
That the results of our study are the same across mean degrees 10, 15, and 20. The NCX KO $\gamma$ is still significantly smaller than that of WT AP-present network, and similarly the $\lambda$ and $\gamma$ are significantly higher for the WT AP-removed network than for the other two (see \Cref{fig:box-whisker}).

\subsection{Details of Our Model and the Inappropriateness of Competing Models}

\subsubsection{Our model} We study the applicability of the geometric preferential attachment model \cite{yook2002modeling, vertes2012simple} to the AVN networks. Since our networks are of different sizes, optimization parameter will be different from \cite{vertes2012simple}, though we take inspiration from them in their use of aggregating t-test statistics. For each of our seven data points for WT and nine data points in KO, we read the number of nodes in the network. Then we generate a geometric preferential attachment model for each number of nodes, and we save the standard deviation of its degree distribution, $C$, and $L$. Then construct a paired t-test statistics for the difference between the simulated graph and the data in the respective columns, and read off three different p-values corresponding to standard deviation of degree distribution, clustering coefficient, and average shortest path lengths. We multiply them together to form an aggregate optimization parameter, and we for each value of $\alpha$ and $\sigma$ we do this 1000 times, and average the optimization parameter the 1000 runs. The brute force optimization was run on Queen Mary University High Performance Computing Apocrita Cluster. The result of these simulations are the values of $\alpha$ and $\sigma$ that maximize the optimization parameter. The optimization was run for $0 \leq \alpha \leq 12$ and $0 \leq \sigma \leq 10$ with increment 0.1 for all the networks.
However, for WT AVN AP-present and NCX KO networks, the largest value maximization parameter with $\sigma = 0$ (a pure non-linear preferential attachment model) is only  0.6\% for AP-present, 4.2\% for AP-removed and 12.8\% for NCX KO away from the $p$-value for the $(\alpha, \sigma)$-maximizer respectively, so we opt for the simpler non-linear preferential attachment model. The optimal $(\alpha, \sigma)$ values were (8.7, 0.7) for AP-present, (5, 0.8) for AP-removed, and (5.5, 2.1) for NCX KO.

\subsubsection{Other Candidate Models}

The Erd\"os-Renyi model has small clustering coefficients (equal to the network density), while the AVN networks have much larger clustering, thus it is not an appropriate model for the AVN.
The variance of the degree distribution in AVN networks is very large. It is larger than that of Watts-Strogatz model and Erd\"os-Renyi model. A soft random geometric graph is constructed by placing vertices uniformly at random in a unit square, and connecting points with a probability that is a function of the distance between them. Here we offer a mathematical proof that in a geometric graph with any connection function,  given a node, its expected degree is equal to the variance of its degree distribution. While this is not the same as mean degree of the graph and the variance of the degree distribution of the graph due to dependence between vertex degrees, it is suggestive that these quantities will be reasonably close. However, in the AVN networks the variance of the degree distribution is vastly larger than the mean
(see \Cref{fig:vars}).

\begin{theorem}
Let $G = \{V, E\}$ be a graph with vertex set $V$ and edge set $E$. Suppose the vertices form a Poisson process in the plane with intensity $\lambda$ and two vertices with coordinates $v, w$ are connected  independently of any other connections with probability $f(\|v-w\|)$, where $\|\dot\|$ is Euclidean distance and $f$ is differentiable and such that $f(x)$ and $x f(x)$ are integrable on the half-line. Let $D(v)$ be the degree of vertex $v$. Then
\begin{equation}\label{e:E=var}\mathbb{E} D(v)= \text{var} D(v) = 2 \pi \lambda \int_0^\infty r f(r) dr
\end{equation}
\end{theorem}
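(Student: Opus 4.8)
The plan is to reduce the statement to two classical facts about homogeneous Poisson point processes: Slivnyak's theorem (the Palm distribution of the process at a point coincides with the law of the process with that point inserted) and the thinning theorem (an independent, location-dependent thinning of a Poisson process is again Poisson). First I would make the phrase ``the degree of vertex $v$'' precise. Since $v$ is itself a random point of the configuration, the natural reading is under the Palm distribution at $v$; by Slivnyak's theorem I may place $v$ at the origin and take the remaining points to form a homogeneous Poisson process $\Phi$ of intensity $\lambda$ on $\R^2$. The degree $D(v)$ is then the number of $w \in \Phi$ for which the independent connection variable, Bernoulli with parameter $f(\|w\|)$, equals $1$.

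Next I would recognise the connected neighbours as a thinning of $\Phi$ in which each point $w$ is retained, independently, with probability $f(\|w\|)$. By the thinning theorem the retained set is a Poisson process with intensity measure $\lambda f(\|w\|)\,dw$, so $D(v)$, its total count, is a Poisson random variable with parameter
\begin{equation*}
\Lambda = \lambda \int_{\R^2} f(\|w\|)\,dw = \lambda \int_0^{2\pi}\!\!\int_0^\infty f(r)\,r\,dr\,d\theta = 2\pi\lambda\int_0^\infty r f(r)\,dr,
\end{equation*}
where the polar change of variables uses the radial symmetry of $w \mapsto f(\|w\|)$. The hypothesis that $x f(x)$ is integrable on the half-line guarantees $\Lambda < \infty$, so $D(v)$ is an almost surely finite Poisson variable, and since a Poisson law has equal mean and variance we obtain $\mathbb{E} D(v) = \text{var}\, D(v) = \Lambda$, which is exactly \eqref{e:E=var}.

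As an alternative route that makes the cancellation transparent and avoids quoting the thinning theorem, I would condition on $\Phi$ and apply the law of total variance. Given $\Phi$, the variable $D(v)$ is a sum of independent Bernoulli variables, so $\mathbb{E}[D(v)\mid\Phi] = \sum_{w} f(\|w\|)$ and $\text{var}[D(v)\mid\Phi] = \sum_{w} f(\|w\|)\big(1-f(\|w\|)\big)$. Taking expectations via Campbell's formula for the mean of a sum, and using the Poisson identity $\text{var}\big(\sum_{w} g(w)\big) = \lambda\int g^2\,dw$ for the variance of the conditional mean, gives
\begin{equation*}
\text{var}\, D(v) = \lambda\!\int f(\|w\|)\big(1-f(\|w\|)\big)\,dw + \lambda\!\int f(\|w\|)^2\,dw = \lambda\!\int f(\|w\|)\,dw = \mathbb{E}\,D(v),
\end{equation*}
the $f^2$ terms cancelling exactly, which is the structural reason mean and variance agree.

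I expect the only genuine subtlety to be the first step: giving rigorous meaning to fixing an ``arbitrary'' vertex of a random configuration, which is precisely what the Palm/Slivnyak framework supplies; everything afterwards is routine. I would also remark that neither route uses the differentiability of $f$ — only integrability of $rf(r)$, ensuring $\Lambda<\infty$, is needed, with the bound $f \le 1$ coming for free from $f$ being a connection probability.
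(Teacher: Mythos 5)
Your proof is correct, and it rests on the same underlying fact as the paper's --- that an independent thinning of a Poisson process yields Poisson counts --- but the execution is genuinely different and in some respects sharper. The paper discretizes the plane into concentric annuli of width $\varepsilon$ about $v$, applies Poisson thinning annulus-by-annulus to get independent Poisson counts $k_{i,v}$ with parameter $2\pi\lambda r\varepsilon f(r)+O(\varepsilon^2)$, sums the means and variances over annuli, and lets $\varepsilon\to 0$; the differentiability hypothesis is used only to write $f(x)=f(r)+O(\varepsilon)$ on each annulus. You instead invoke the thinning theorem globally, concluding that the set of connected neighbours is itself a Poisson process with intensity $\lambda f(\|w\|)\,dw$, so that $D(v)$ is \emph{exactly} Poisson with parameter $2\pi\lambda\int_0^\infty r f(r)\,dr$ --- a strictly stronger conclusion than the mean--variance identity, obtained with weaker hypotheses, since as you correctly observe differentiability of $f$ plays no role (only integrability of $rf(r)$, and $f\le 1$ is automatic for a connection probability). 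You also supply the Palm/Slivnyak justification for fixing a ``given node,'' a point the paper passes over silently, and your second route via the law of total variance isolates the exact cancellation of the $f^2$ terms, which is a nice structural explanation of why mean and variance coincide that the annulus argument obscures. The paper's discretization buys self-containedness --- it effectively reproves the radial case of the thinning theorem from scratch --- at the cost of an unnecessary smoothness assumption and some unquantified $O(\varepsilon^2)$ bookkeeping in the infinite sum.
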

\begin{proof}
Let us decompose the plane into disjoint concentric annuli of width $\varepsilon$ centered at $v$ and number them from 1 to $\infty$, and call them $A_i$. Because the vertices are drawn from a Poisson process the number of vertices $k_i$ in each $A_i$ is an independent random variable distributed according to the Poisson distribution with parameter $\pi \lambda (r+\varepsilon)^2 - \pi r^2) = 2 \lambda \pi r \varepsilon + O(\varepsilon^2)$. Also the number of vertices inside $A_i$ that are connected to $v$, call it $k_{i, v}$, are independent random variables. Since $$D(v)= \sum_{i=1}^{\infty} k_{i,v}$$
and $k_{i,v}$ are independent, we have that
\begin{align*}
\mathbb{E}D(v) &= \sum_{i=1}^{\infty} \mathbb{E} k_{i,v} \\
\text{var}D(v) &= \sum_{i=1}^{\infty} \text{var}
(k_{i,v}).\end{align*}
A differentiable function $f(x)= f(r)+ O(\varepsilon)$ for any $x\in [r, r+\varepsilon]$. By Poisson thinning, $k_{i, v}$ has Poisson distribution with parameter $2 \pi r \varepsilon f(r) \lambda + O(\varepsilon^2)$, which has variance and mean equal to $2 \pi r \varepsilon f(r) \lambda + O(\varepsilon^2)$, thus summing over $i$ and taking $\epsilon$ to 0 we \eqref{e:E=var}.
\end{proof}

\section*{Accopanying files}
\begin{itemize}
\item WT.avi: Example of Calcium signalling video in wild-type tissue.

\item KO.avi: Example of Calcium signalling in NCX KO tissue.

\item WT\_AP\_present.xlsx: Network metrics for WT AP-present networks for mean degrees 10, 15, 20, and 25.

\item KO.xlsx: Network metrics for KO networks for mean degrees 10, 15, 20, and 25.

\item WT\_AP\_removed.xlsx: Network metrics for WT AP-removed networks for mean degrees 10, 15, 20, and 25.
\end{itemize}

\bibliographystyle{plain}
\bibliography{references}
\newpage

\begin{figure*}
\centering
\includegraphics[width=.32\textwidth]{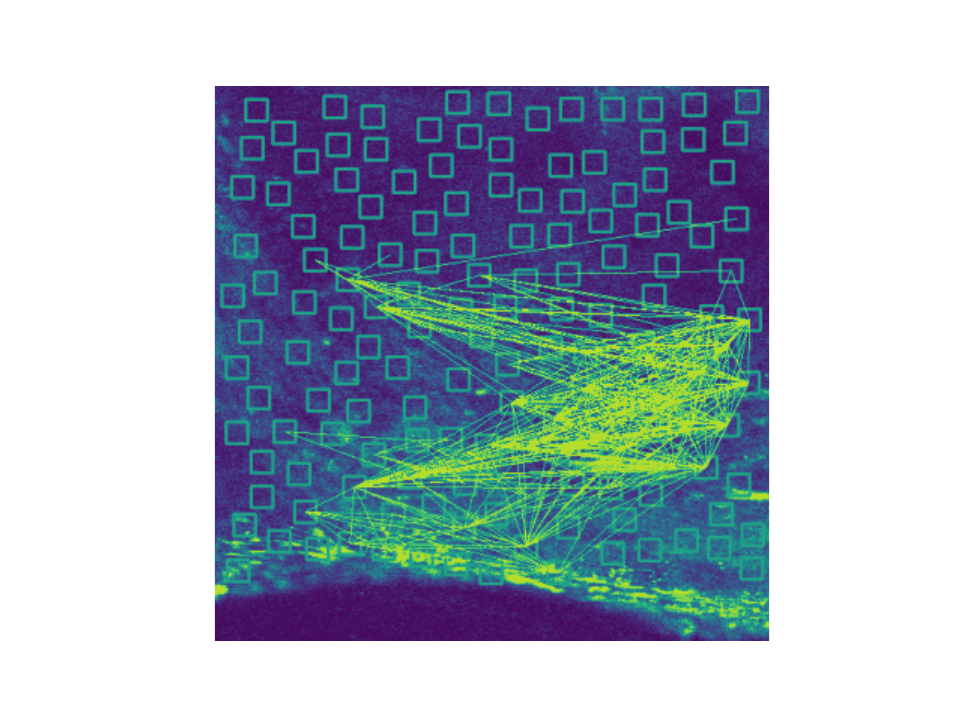}
\includegraphics[width=.32\textwidth]{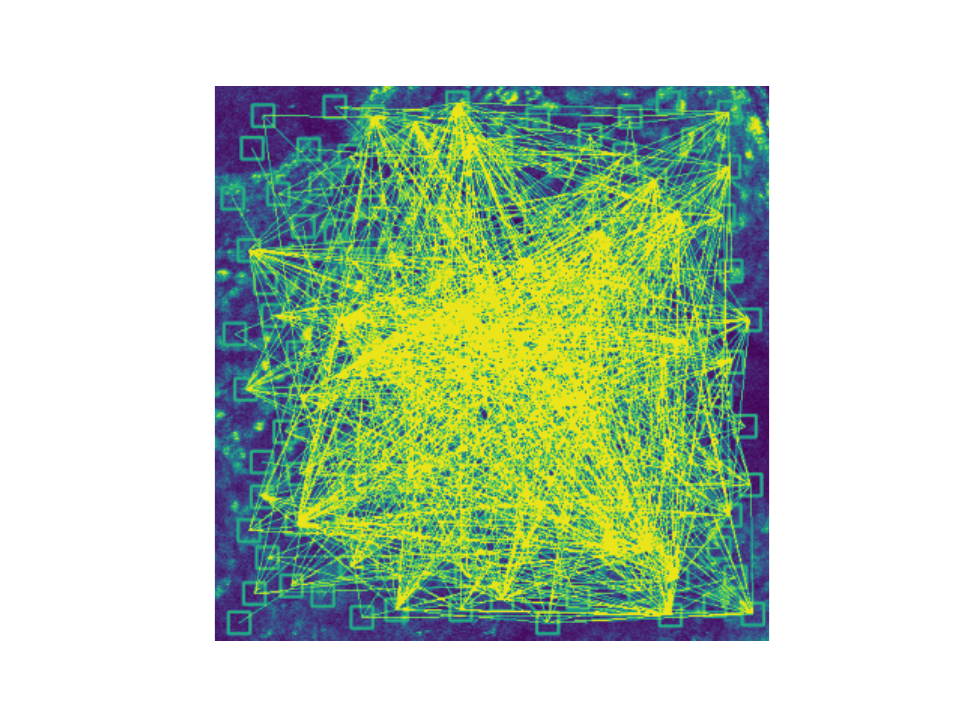}
\includegraphics[width=.32\textwidth]{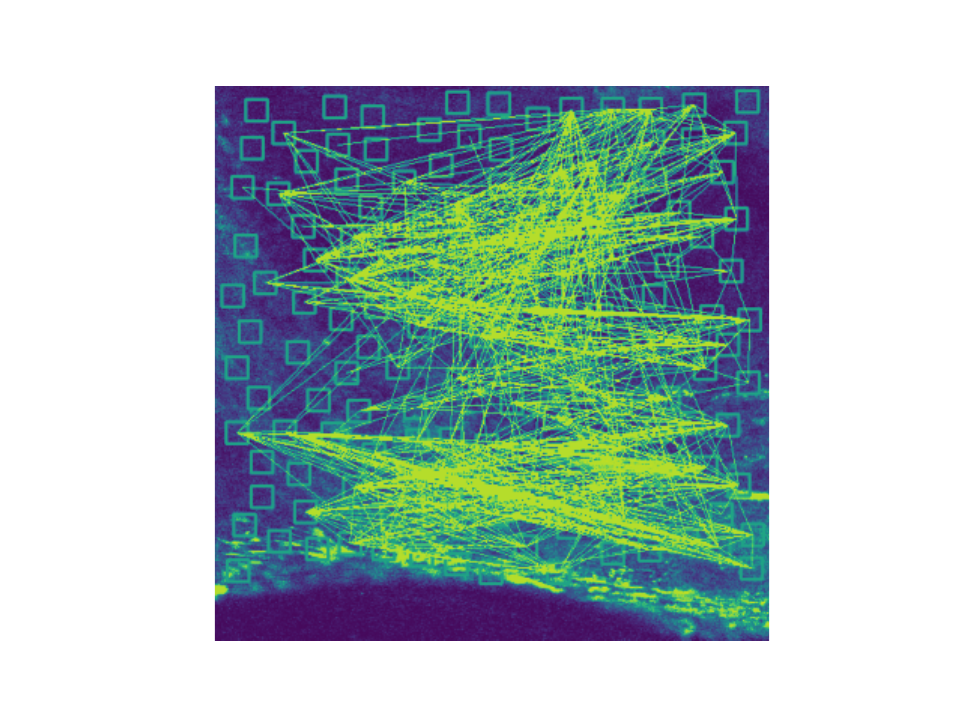}
\includegraphics[width=.32\textwidth]{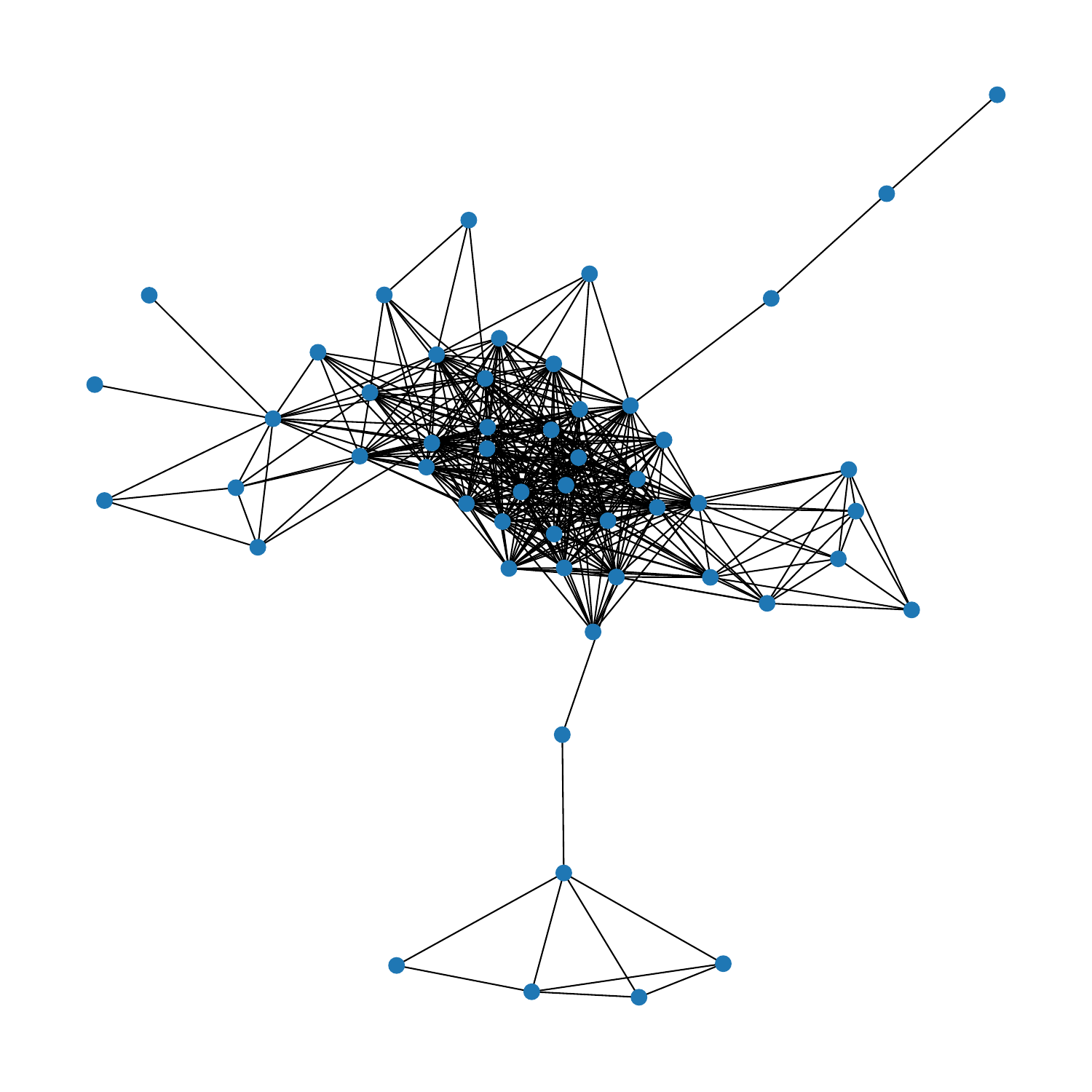}
\includegraphics[width=.32\textwidth]{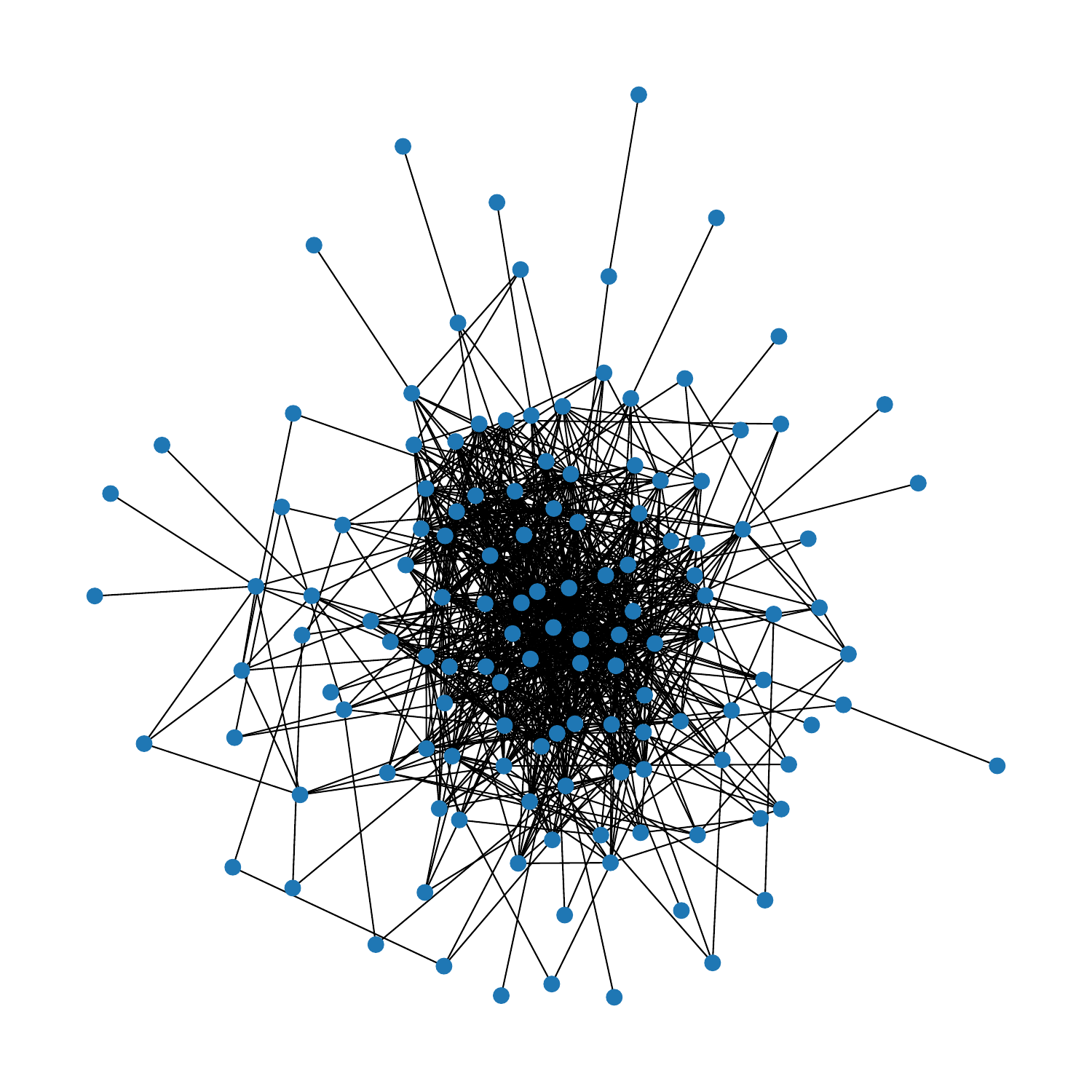}
\includegraphics[width=.32\textwidth]{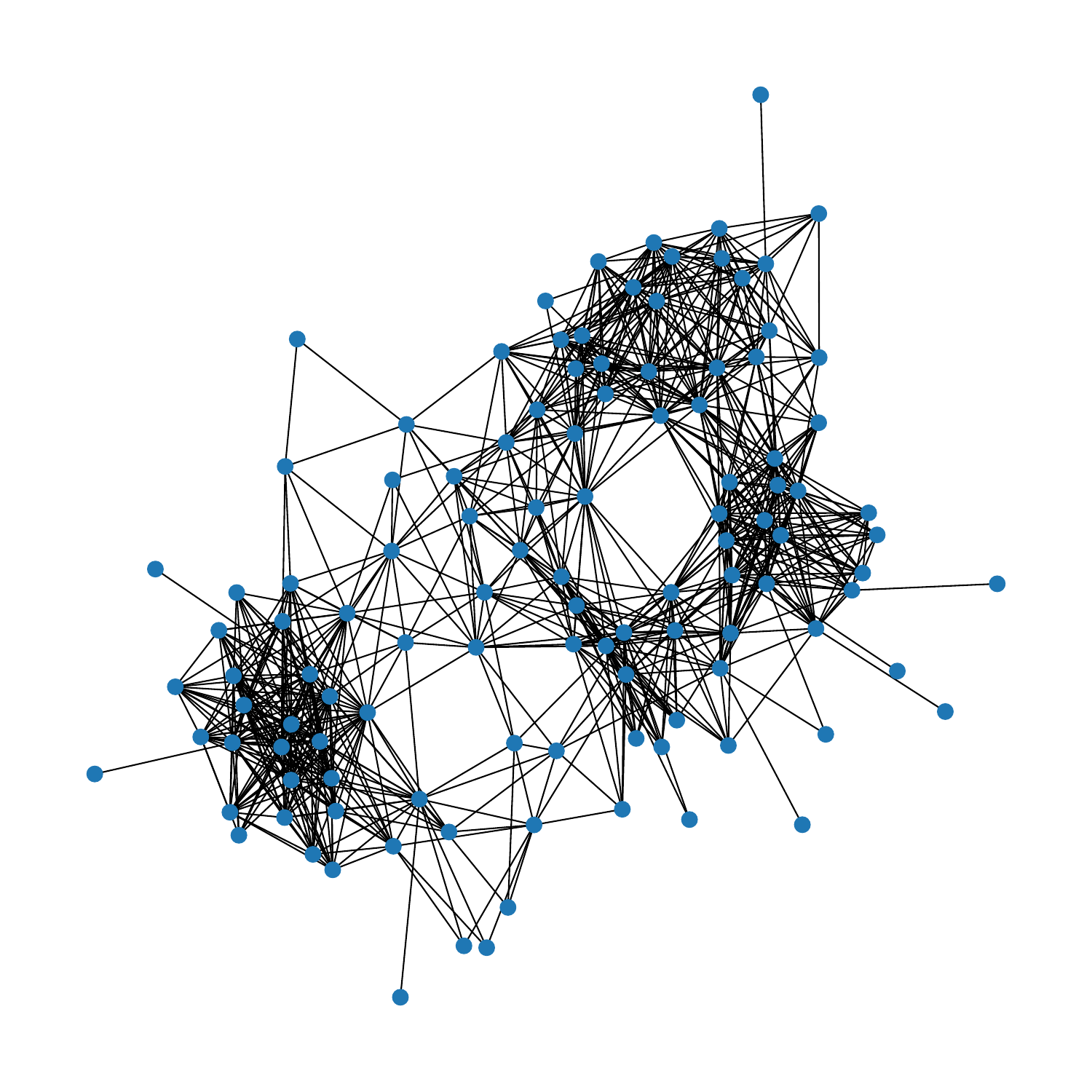}

\caption{Examples of networks presented as networks (bottom) and as embedded in the AV node video (top) for WT AP-present(left), and NCX KO (middle), and WT AP-removed (middle).}\label{fig:network-deg-histogram}
\end{figure*}

\begin{figure}[h]
\centering
 \includegraphics[width=.6\linewidth]{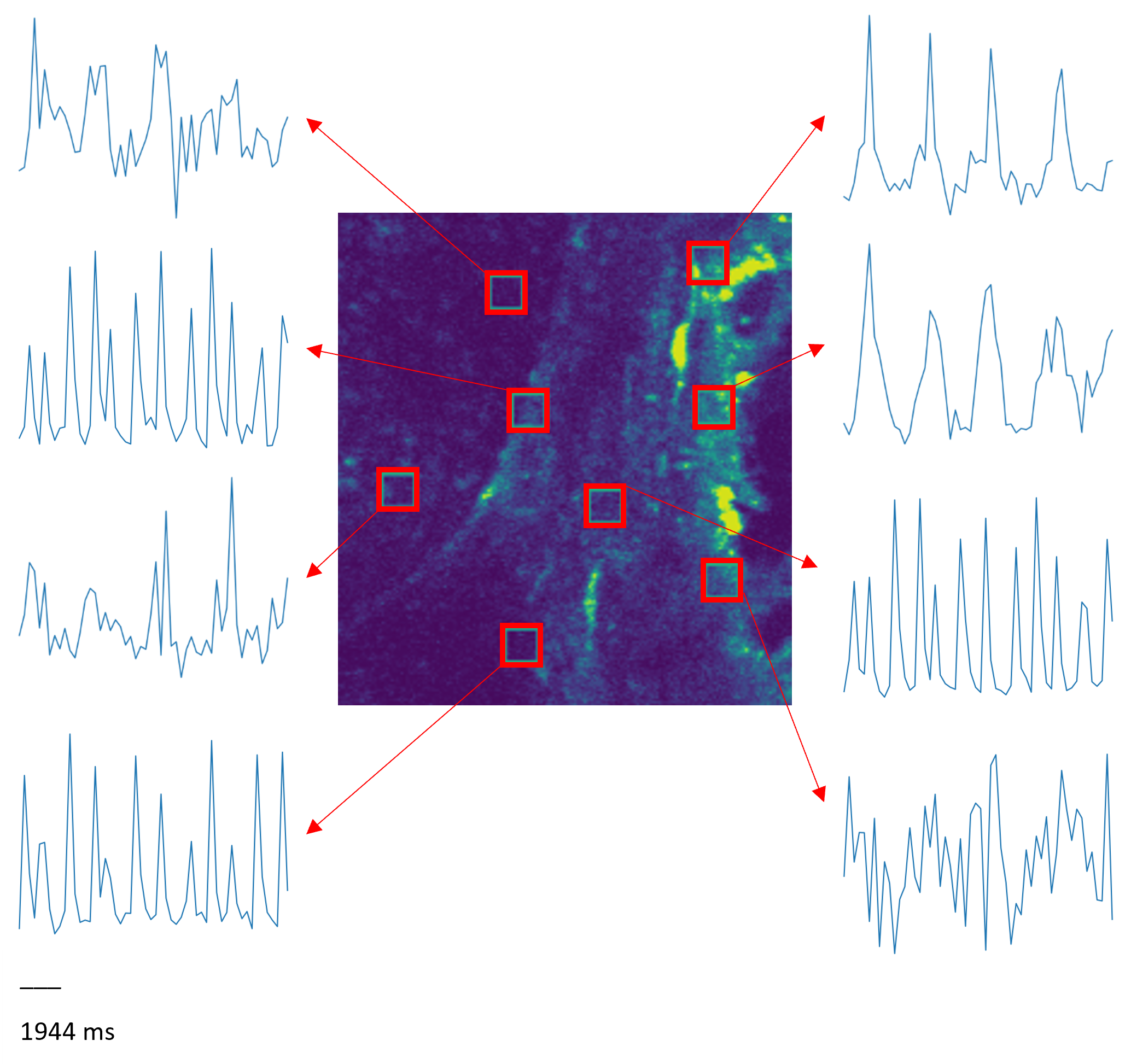}
 \includegraphics[width=.6\linewidth]{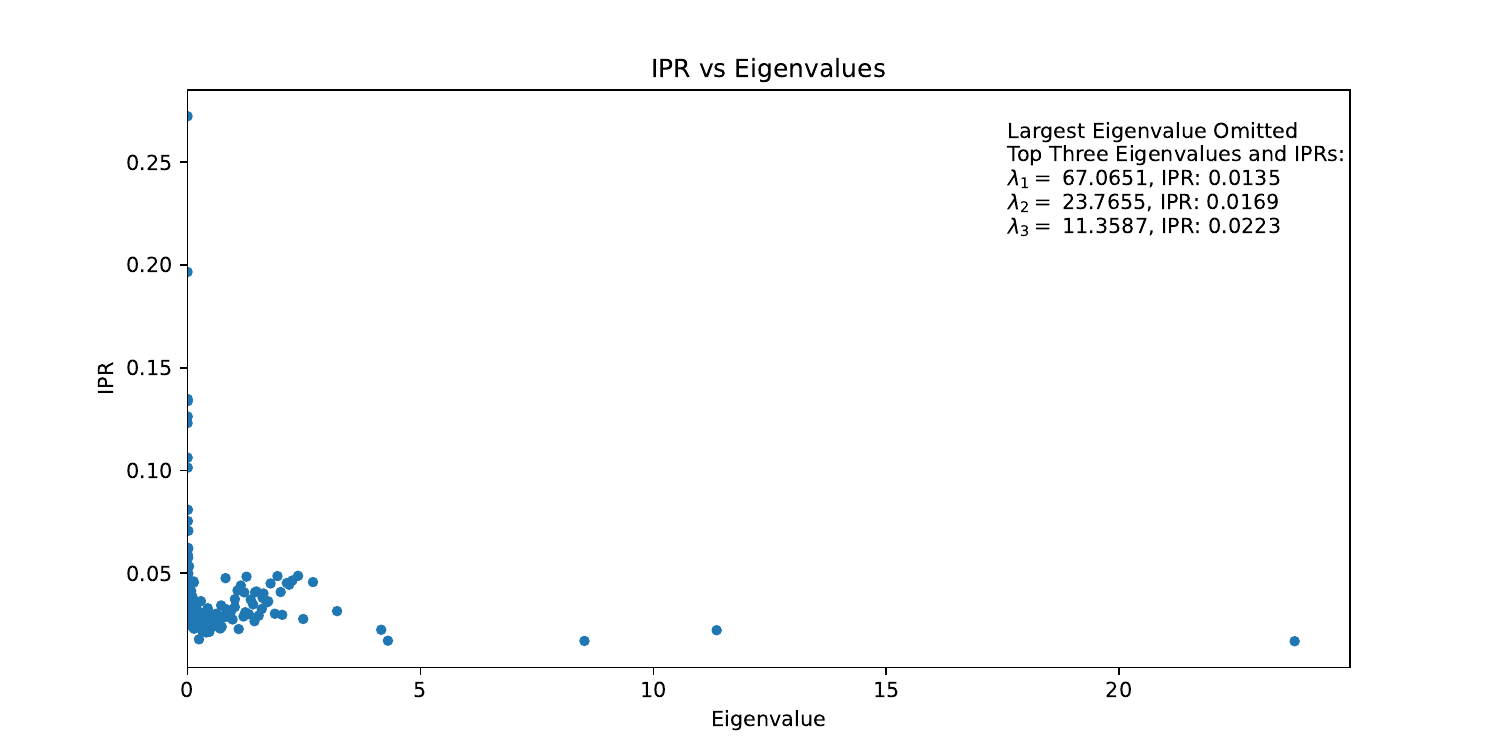}\\
\includegraphics[width=.3\linewidth]{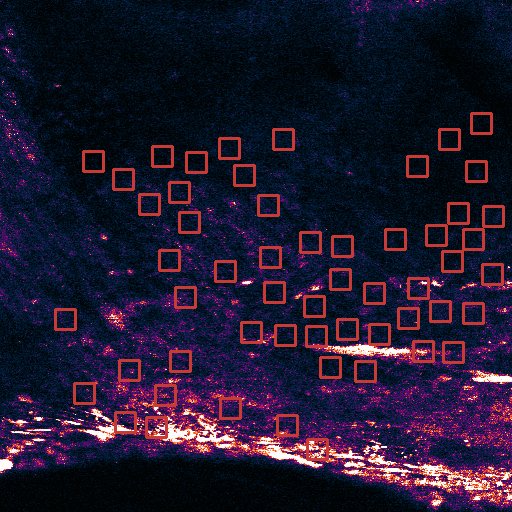}
\includegraphics[width=.3\linewidth]{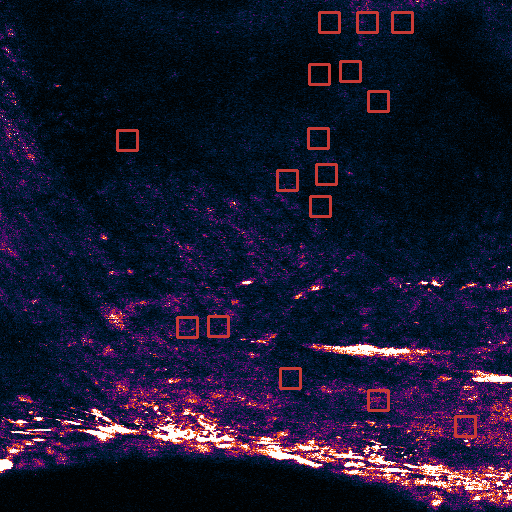}

\caption{(top) Snippets of calcium time series from different regions of the AV node demonstrating heterogeneous calcium signaling; (middle) Example of IPR vs eigenvalue scatter plot for a calcium imaging video; (bottom) Regions of the AV node where the 1st PC (left) and the 2nd PC (right) are larger than 1.7sd, see SI for details, indicating where the global AP and the next largest semi-global AP are localised}
\label{fig:heterogeneity}
\end{figure}

\begin{figure}[h]
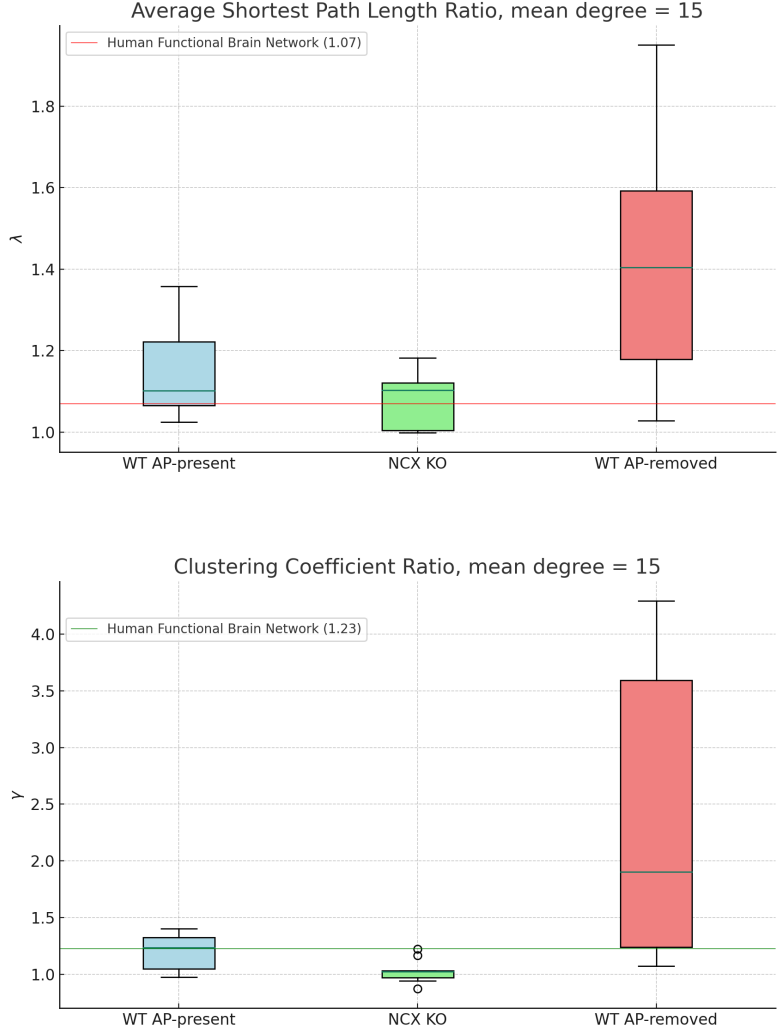

\centering
\includegraphics[width=.7\linewidth]{lambda15}
\includegraphics[width=.7\linewidth]{gamma15}
\caption{Comparisons of $\lambda$ between NCX KO, WT AP-present, and WT AP-removed}
\label{fig:box_whiskers}
\end{figure}

\bigskip
 \begin{figure}[h]
\centering

 \begin{tabular}{c |c c c c  }
network fitted & $\alpha$ &  sd degree & $C$ & $L$\\
WT AP-present& 8.4  & 8.07 & 0.754 & 1.93\\
& & 12.8\%, 0.42 & 0.2\%, 0.03 & 3.2\%, 0.19\\
WT AP-removed & 4.9 &11.0 & 0.682 &1.99 \\
& & -20.5\%, 1.0 & 1.0\%, 0.10 & 30\%, 1.12 \\
NCX KO & 4.1  & 10.8  & 0.606  & 2.03\\

 & & 17\%, 0.86  & -8.0\%, -0.24 & 5.2\%, 0.38
\\
 \end{tabular}
 \caption{Comparison of model parameters fit to various networks. The columns ``mean degree," ``sd degree", $C$, and $L$ contain the corresponding metrics from the fitted model (1st line) and two measures of discrepancy (2nd line): percent discrepancy between the real network and the fitted model (network - model)/network and number of sd of the data. Brute-force optimization was based on optimizing $t$-statistics of the columns similar to \cite{vertes2012simple}.
\label{fig:model_table}}
\end{figure}

\begin{figure}[h]
\centering
 \includegraphics[width=.8\linewidth]{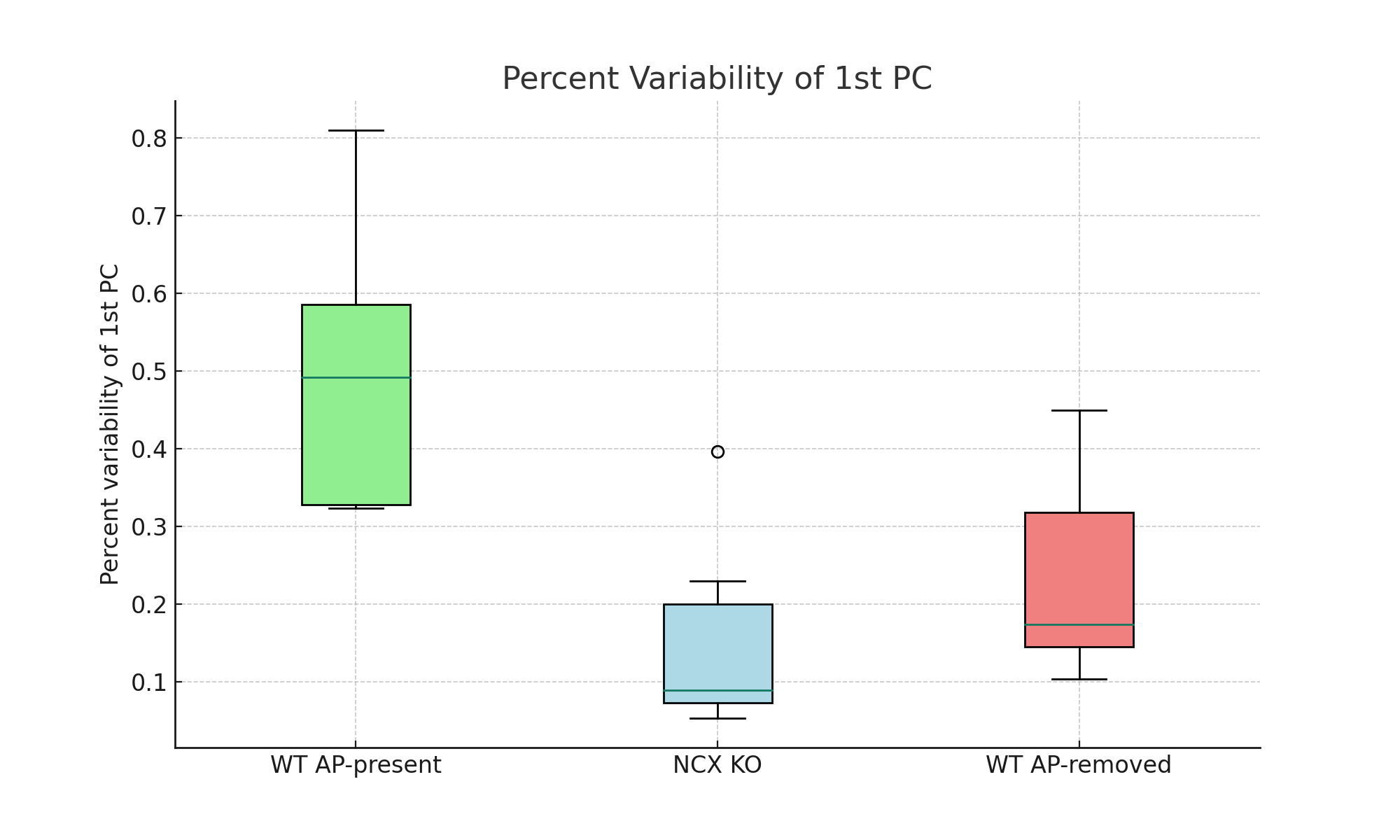}
\caption{Percent of the data variability explained by the 1st PC }
\label{fig:PCA}
\end{figure}

\begin{figure}[h]
\centering
 \includegraphics[width=.49\linewidth]{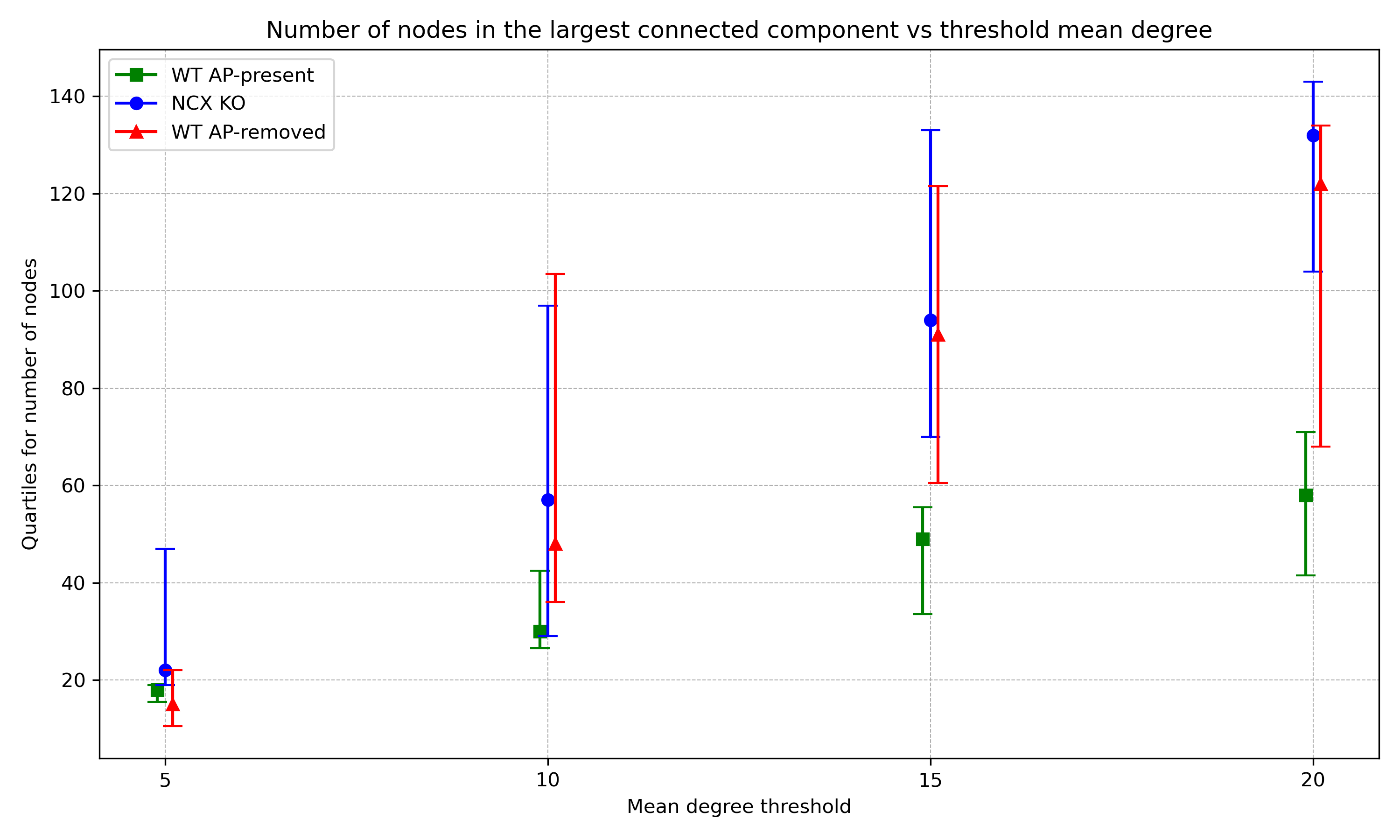}
 \includegraphics[width=.49\linewidth]{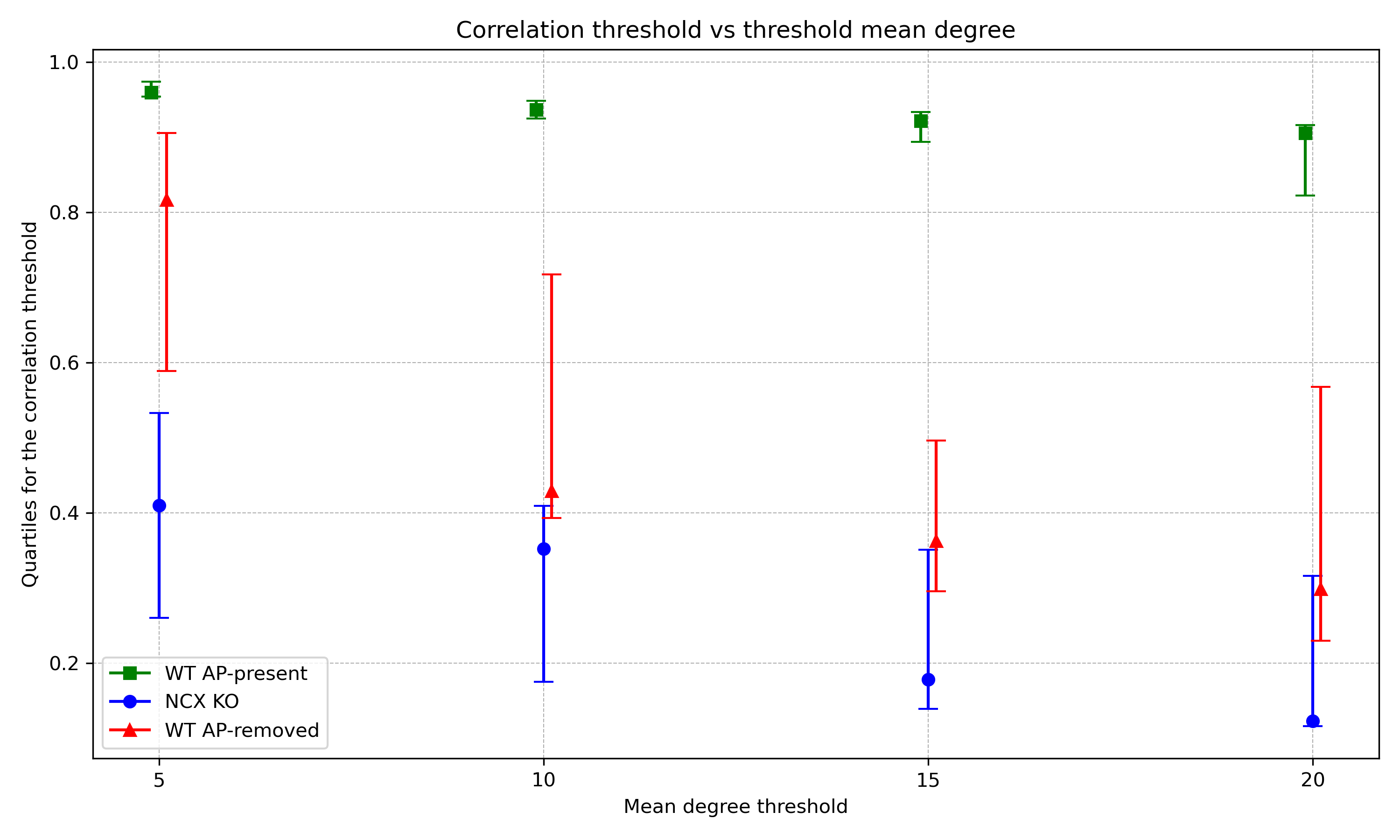}
 \includegraphics[width=.49\linewidth]{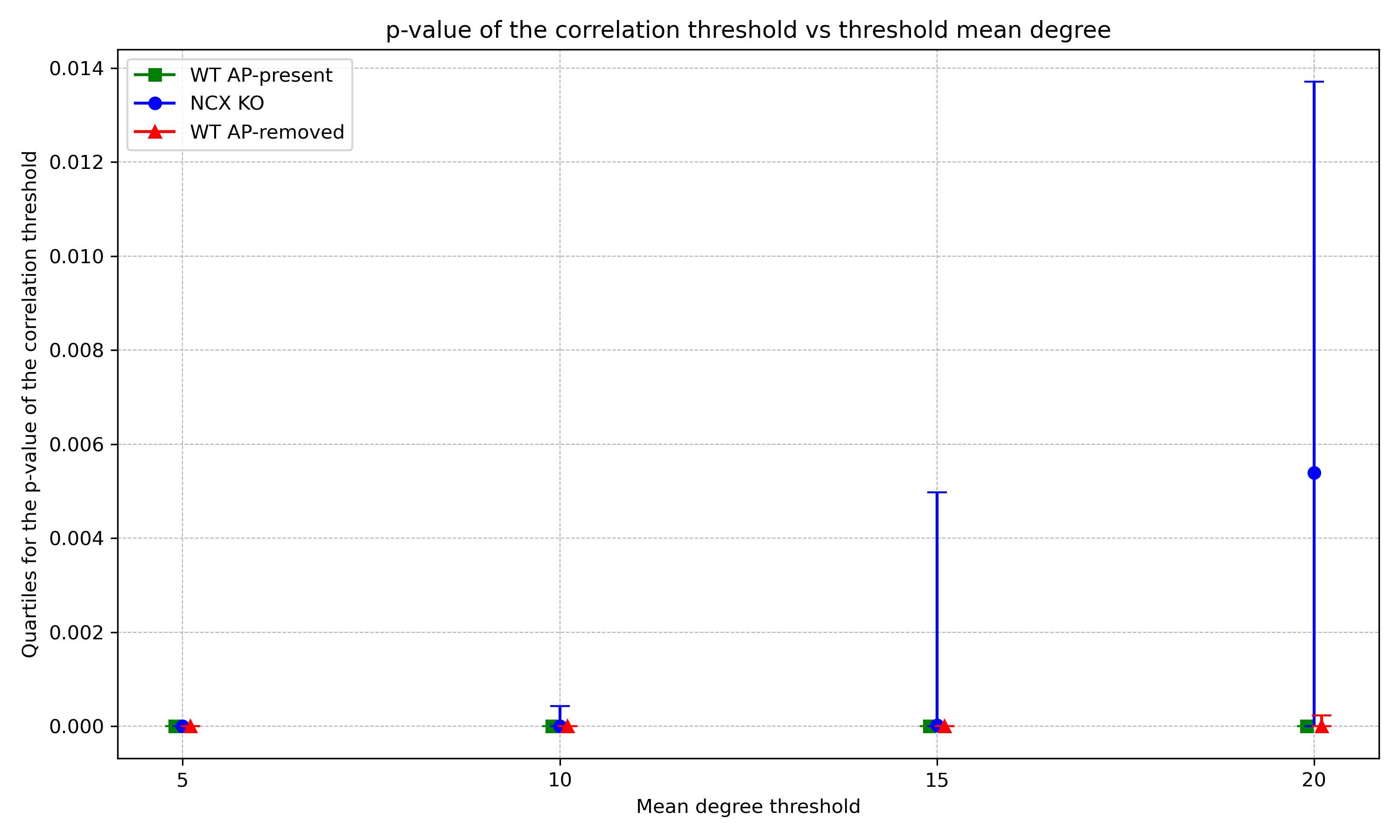}
\caption{Dependence of various network parameters on the mean degree threshold}
\label{fig:mean-deg-dependence}
\end{figure}

\newpage

\begin{figure*}
\centering
\includegraphics[width=.49\textwidth]{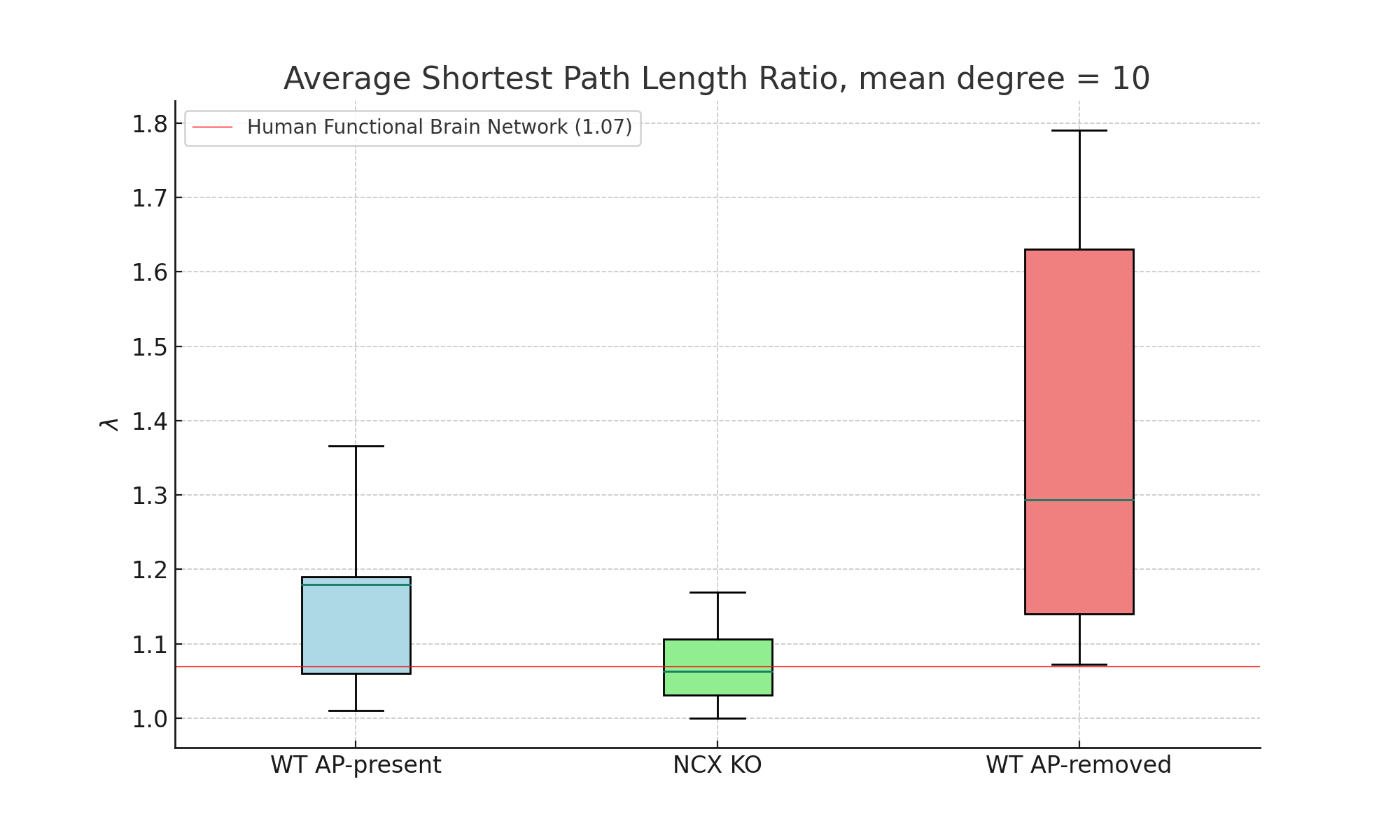}
\includegraphics[width=.49\textwidth]{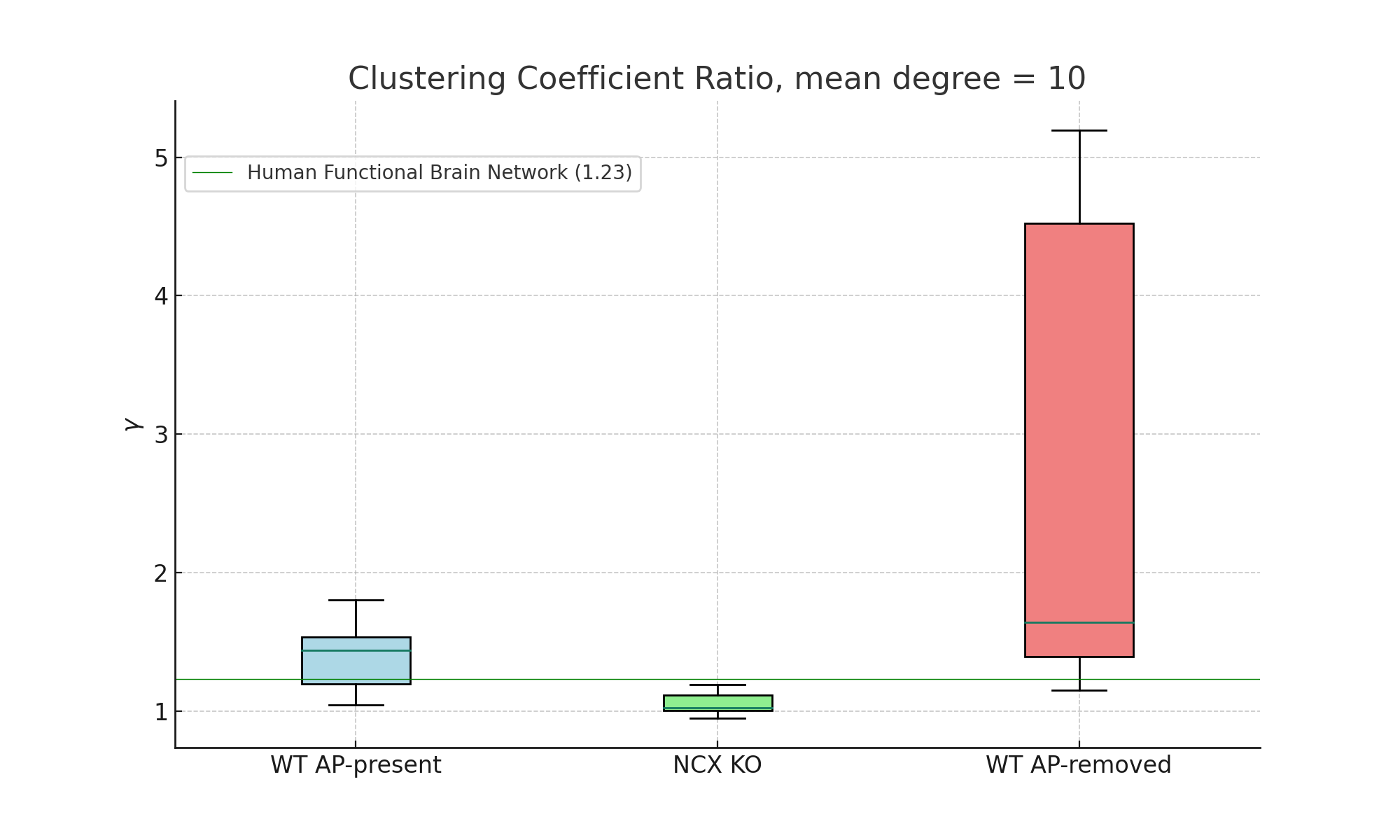}
\includegraphics[width=.49\textwidth]{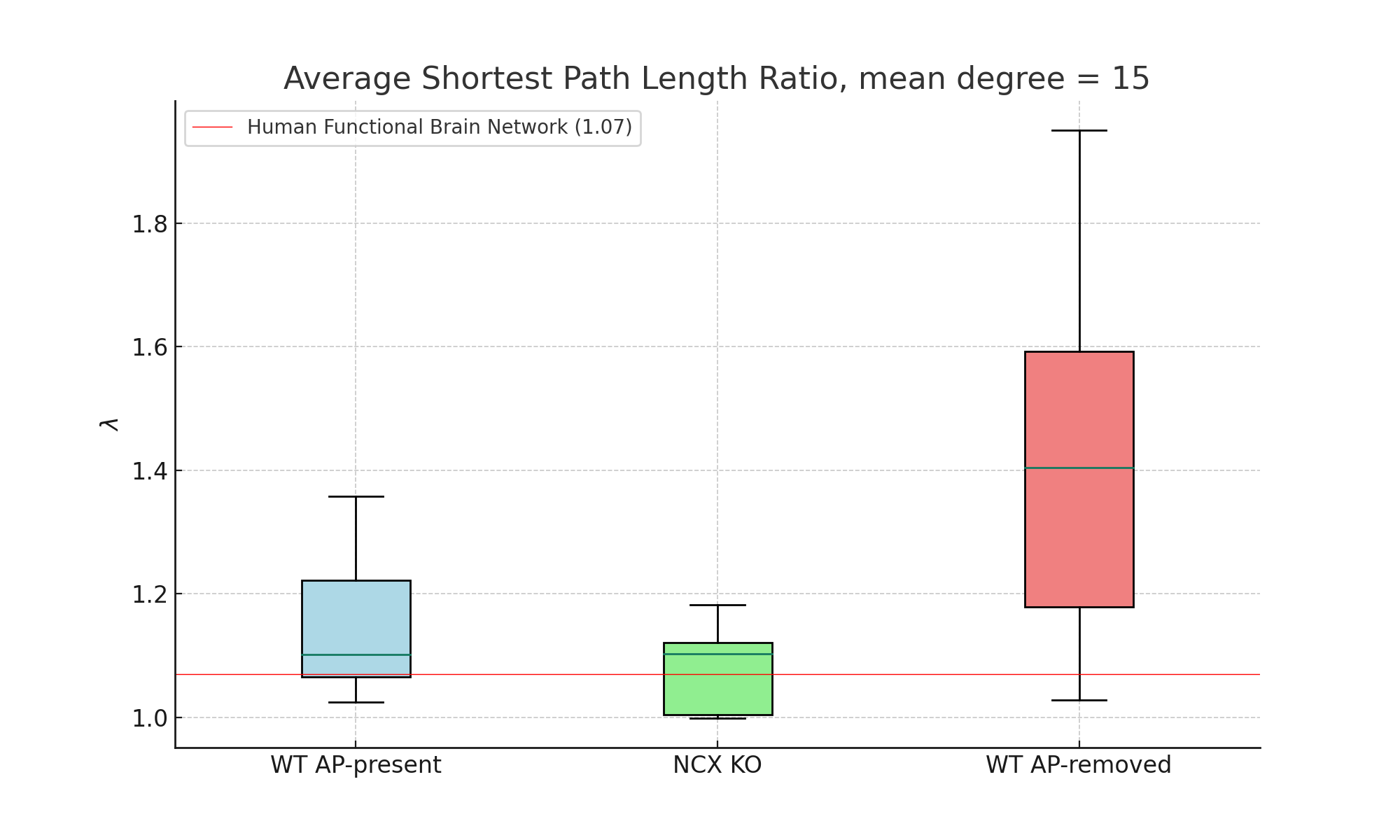}
\includegraphics[width=.49\textwidth]{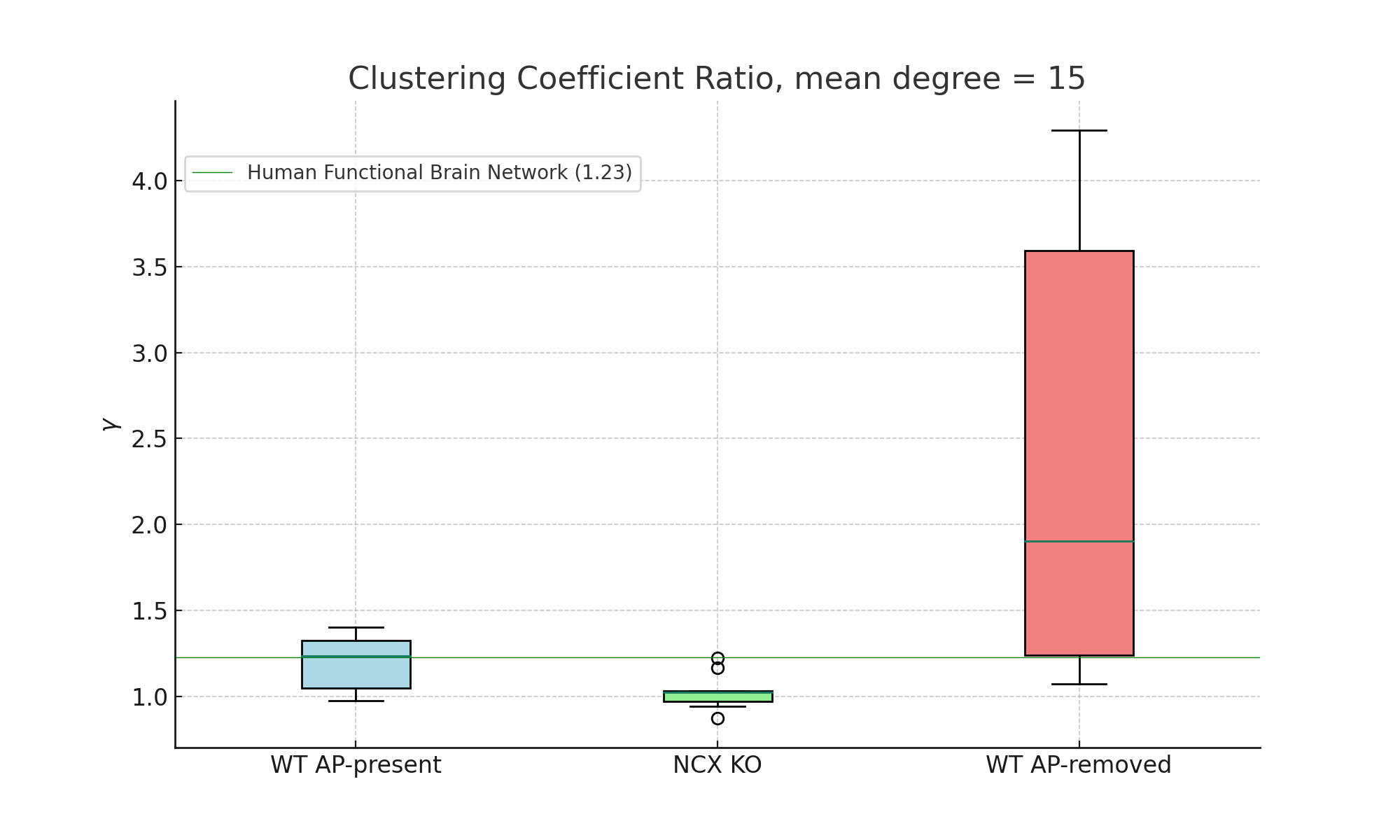}
\includegraphics[width=.49\textwidth]{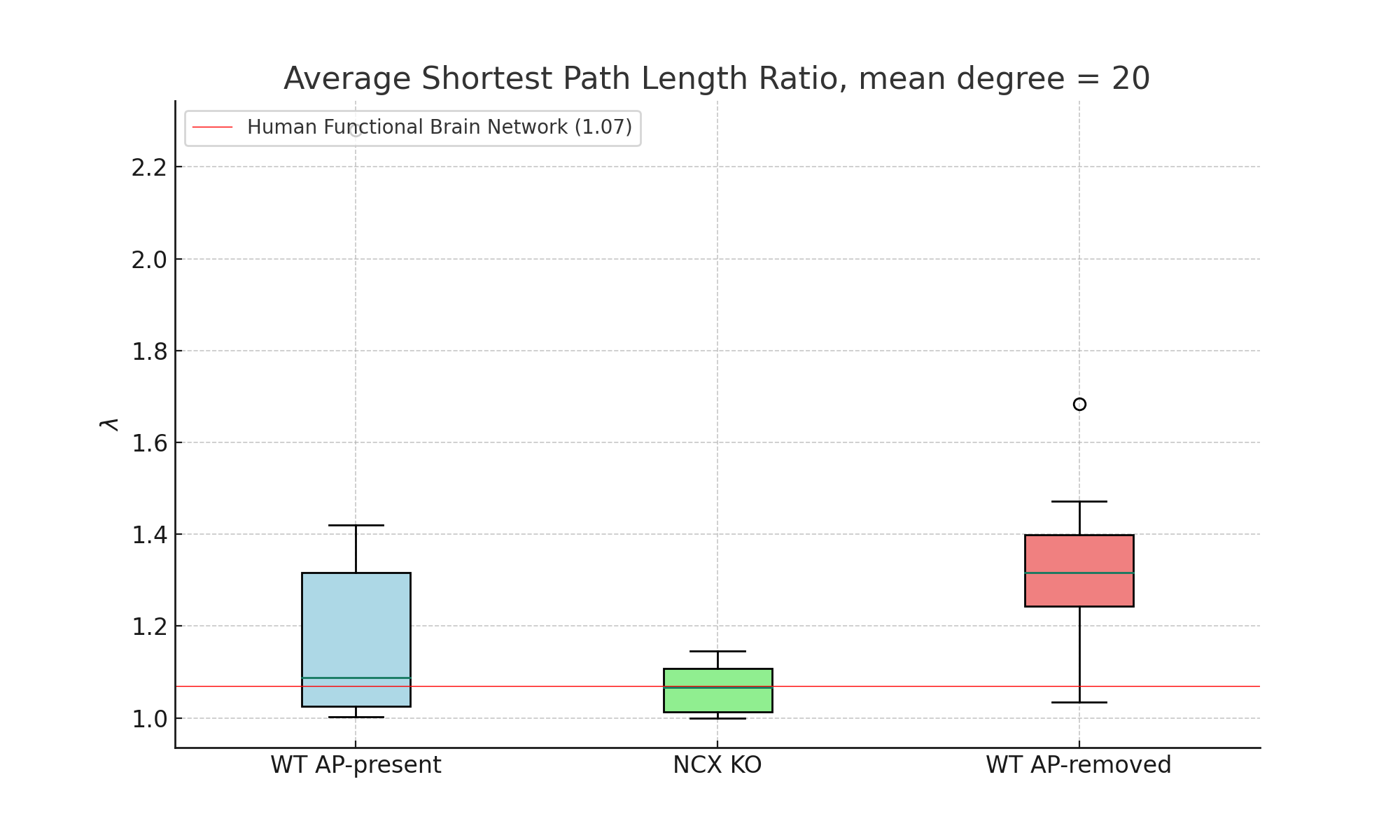}
\includegraphics[width=.49\textwidth]{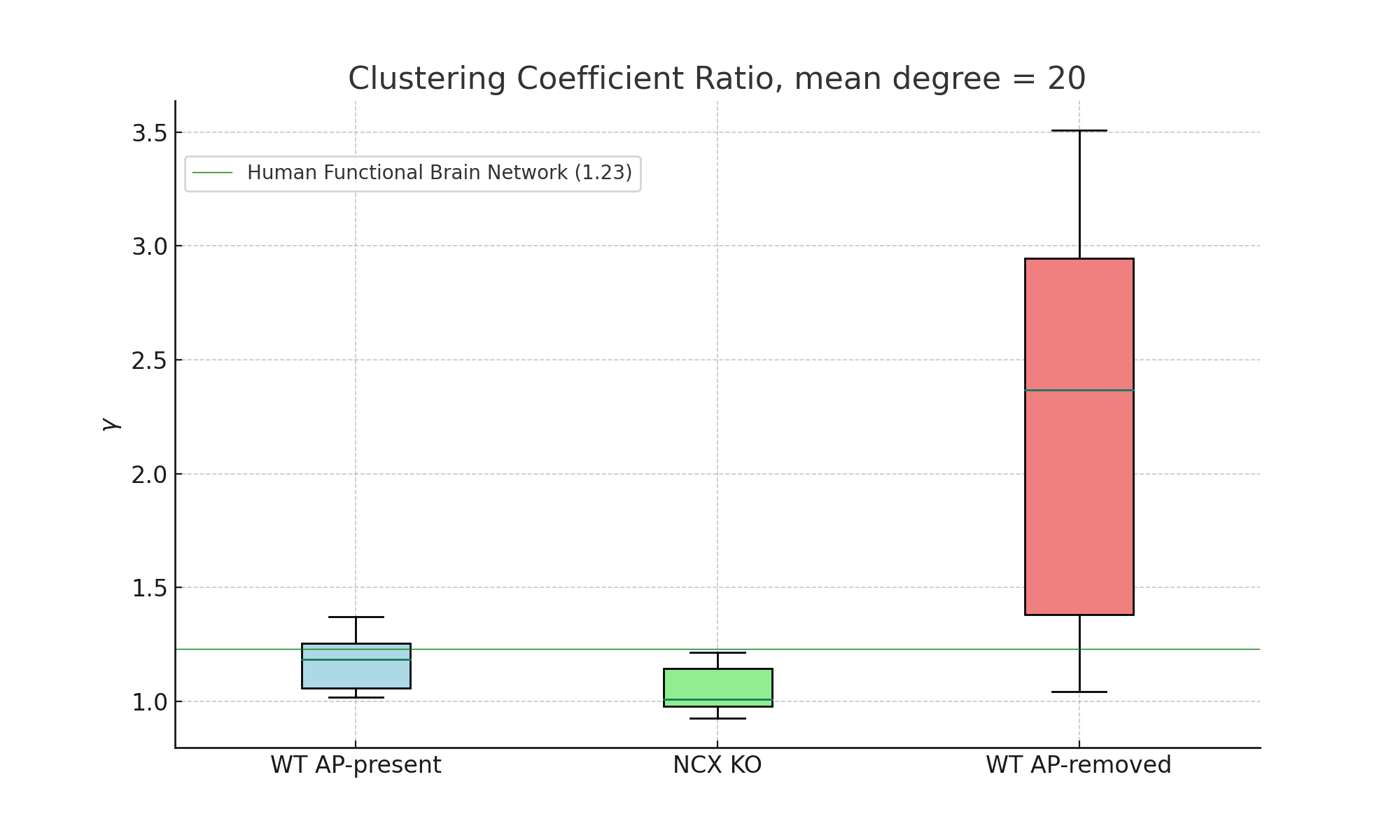}

\caption{Box-and-whisker plots for $\lambda$'s (left) and $\gamma$'s (right) for mean degrees 10, 15, and 20 thresholding}\label{fig:box-whisker}
\end{figure*}

\begin{figure}[h]
\centering
 \includegraphics[width=.8\linewidth]{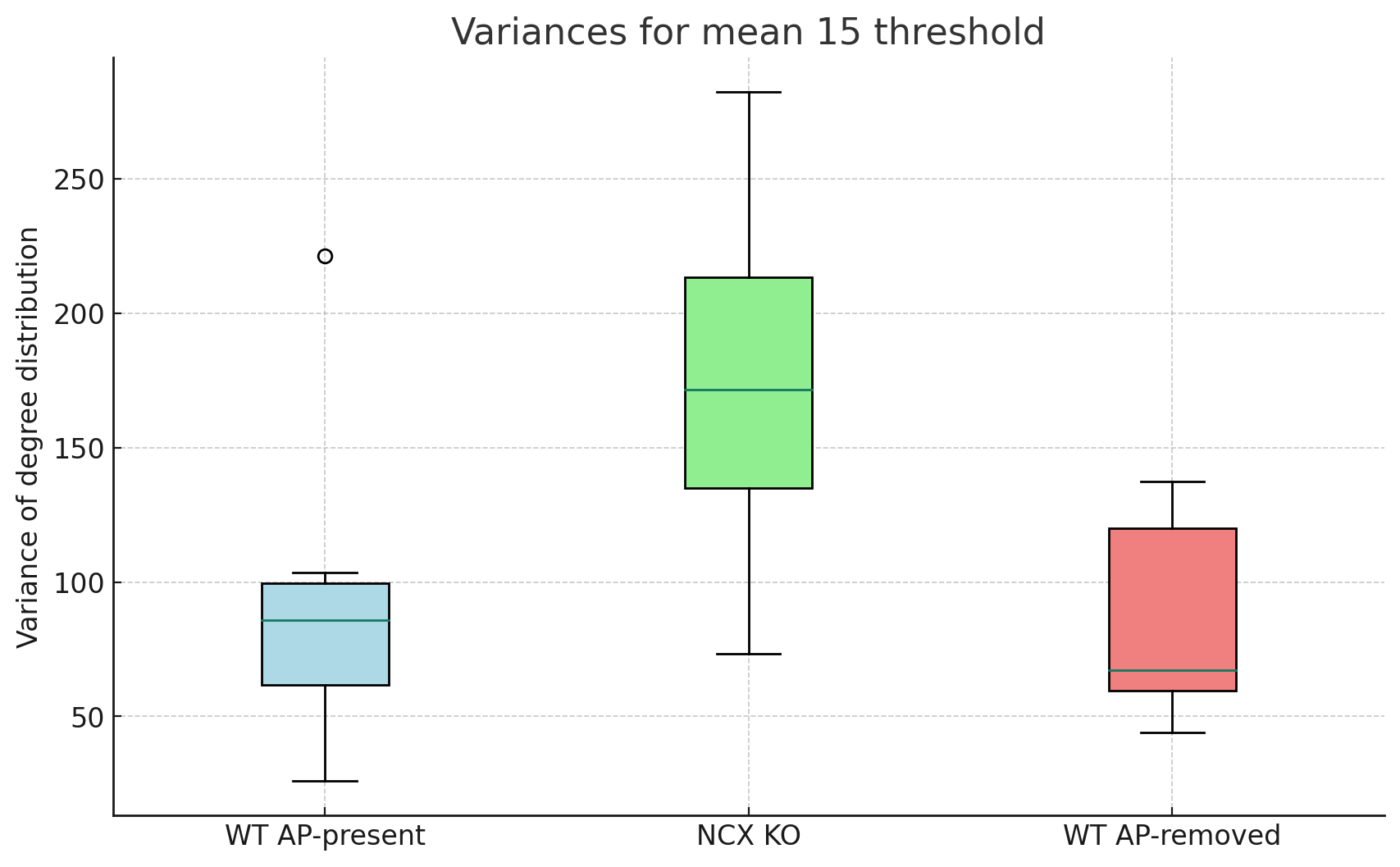}
\caption{The variances of the degree distribution for the three AVN networks }
\label{fig:vars}
\end{figure}

\end{document}